\titleformat{\section}[block]
  {\normalfont\fontsize{12}{14}\bfseries}{\thesection}{1em}{}
\titleformat{\subsection}[block]
  {\normalfont\fontsize{11}{13}\bfseries}{\thesubsection}{1em}{}
\titleformat{\subsubsection}[block]
  {\normalfont\small\bfseries}{\thesubsubsection}{1em}{}
\newtheoremstyle{tight}
  {10pt}                     
  {10pt}                     
  {}                         
  {}                         
  {\bfseries}                
  {.}                        
  { }                        
  {}                         
\theoremstyle{tight}
\newtheorem{definition}{Definition}[section]
\newtheorem{lemma}{Lemma}[section]
\newtheorem{theorem}{Theorem}[section]
\newtheorem{corollary}{Corollary}[section]
\newtheoremstyle{tightproof}
  {10pt}                     
  {10pt}                     
  {}                         
  {}                         
  {\bfseries}                
  {.}                        
  { }                        
  {}                         
\theoremstyle{tightproof}
\renewenvironment{proof}[1][\proofname]{\par\noindent{\bfseries #1. }}{\qed\par}
\title{\textbf{A Polynomial-Time Algorithm for Computing the Exact Convex Hull in High-Dimensional Spaces} \\ \vspace{10pt}}
\author{
    Qianwei ZHUANG\textsuperscript{*} \\[0.5em]
    Research Center for Advanced Science and Technology,
    The University of Tokyo, Japan \\[0.5em]
    \textsuperscript{*}\texttt{qianweizhuang@g.ecc.u-tokyo.ac.jp; qweizhuang@gmail.com} \\
}
\date{} 
\begin{document}

\twocolumn
\maketitle

\begin{abstract}
This study presents a novel algorithm for identifying the set of extreme points that constitute the exact convex hull of a point set in high-dimensional Euclidean space. The proposed method iteratively solves a sequence of dynamically updated quadratic programming (QP) problems for each point and exploits their solutions to provide theoretical guarantees for exact convex hull identification. For a dataset of \( n \) points in an \( m \)-dimensional space, the algorithm achieves a dimension-independent worst-case time complexity of \( O(n^{p+2} \log(1/\epsilon)) \), where \( p \) depends on the choice of QP solver (e.g., \( p = 4 \) corresponds to the worst-case bound when using an interior-point method), and \( \epsilon \) denotes the target numerical precision (i.e., the optimality tolerance of the QP solver).

The proposed method is applicable to spaces of arbitrary dimensionality and exhibits particular efficiency in high-dimensional settings, owing to its polynomial-time complexity, whereas existing exponential-time algorithms become computationally impractical.
\end{abstract}

\vspace{1em} 
\noindent \textbf{\textit{Keywords}}: Computational geometry; Convex hull; High-dimensional spaces; Polynomial-time algorithm
\vspace{1em}

\section{Introduction}
In geometry, the convex hull of a set of points in \( \mathbb{R}^m \) is the smallest convex set that encloses all of the points. Identifying the vertices, also known as extreme points, that characterize the convex hull is a fundamental computational problem with broad applications across many fields \cite{de2000computational}, such as machine learning \cite{nemirko2021machine}, mathematical programming \cite{wei2024convex}, and computer graphics \cite{wei2022approximate}, among others \cite{sivakumar2022comparative,jayaram2016convex,chakravarthy1998obstacle}.

Efficient algorithms for computing the convex hull in low-dimensional spaces (i.e., when \( m = 2 \) or \( 3 \)) have been developed with a worst-case time complexity of \( O(n \log n) \) \cite{graham1972efficient, chand1970algorithm, jarvis1973identification} \footnote{A comprehensive overview of various algorithms is available in \cite{ebert2014interpolation}}. For higher-dimensional spaces, notable methods include the Clarkson-Shor algorithm \cite{clarkson1988applications} and Quickhull \cite{barber1996quickhull}, both exhibiting a worst-case complexity of \( O(n^{\lfloor m/2 \rfloor}) \). This exponential growth in complexity with respect to the dimension $m$ presents significant challenges for both computing and applying convex hulls.

In contrast to exact methods, approximate convex hull alternatives have been extensively explored to address computational challenges in high-dimensional settings. Early approaches include grid-based sampling techniques \cite{bentley1982approximation} and polar-coordinate grid methods \cite{xu1998approximate}, which discretize the space to approximate the hull. Norm-based approaches were later developed \cite{wang2013online, khosravani2013simple}, though they do not explicitly quantify the quality of approximation. Later, a greedy algorithm introduced in \cite{sartipizadeh2016computing} offers a provable guarantee on solution quality. Further advances address robustness to noise, such as the method in \cite{awasthi2018robust}, which can exactly recover the convex hull under provided a suitable robustness parameter $\gamma$ is supplied in noiseless cases. More recently, a neural network-based framework has been proposed for convex hull approximation using deep convex architectures \cite{balestriero2022deephull}.

Although approximation methods improve computational efficiency, they often incur a loss in precision. This work presents an algorithm that computes the exact convex hull in high-dimensional spaces with polynomial time complexity.

The main contributions of this study are summarized as follows. A procedure is proposed to iteratively construct a compact reference set consisting of extreme points, enabling the identification of whether a given point is an extreme point. The extreme points constituting the reference set correspond to the independent decision variables of the associated QP problem, whose objective value asymptotically approaches zero as additional extreme points are incrementally incorporated into the set. Theoretical analysis confirms that the proposed iterative mechanism ensures the optimal objective value of each point’s associated QP converges to zero within a finite number of iterations, thereby yielding the set of extreme points that constitute the exact convex hull once convergence is achieved for all points.

The remainder of this paper is organized as follows. 
Section~2 introduces the notation and reviews the essential theoretical foundations. 
Section 3 presents the proposed method for identifying the set of extreme points that represent the exact convex hull.
Section~4 provides an analysis of its computational complexity. 
Finally, Section~5 concludes the paper with a summary of the key findings.

\section{Convex Hull Problem}
Let \(\mathcal{A} \subset \mathbb{R}^m\) denote a finite set of \(n\) distinct points, expressed as
\(
\mathcal{A} = \{\boldsymbol{x}_i : i \in \mathcal{I}_\mathcal{A}\},
\)
where \(\mathcal{I}_\mathcal{A}\) is an index set of cardinality \(|\mathcal{I}_\mathcal{A}| = n\). The Euclidean norm of \(\boldsymbol{x}_i \in \mathcal{A}\) is denoted by \(\|\boldsymbol{x}_i\|\), and its \(j\)-th component is written as \(x_{ji}\) for \(j = 1, \dots, m\). For any subset \(\mathcal{S} \subseteq \mathcal{A}\), the set difference \(\mathcal{A} \setminus \mathcal{S}\) represents the elements of \(\mathcal{A}\) excluding those in \(\mathcal{S}\). 

This study seeks to identify the vertices of the polytope defined by $\mathcal{A}$, which constitute its convex hull, denoted by $\texttt{conv}(\mathcal{A})$. These vertices are the extreme points of $\mathcal{A}$.

\begin{definition}
\label{def:extreme point}
    $\boldsymbol{x}_l \in \mathcal{A}$ is an extreme point of $\mathcal{A}$ if and only if it cannot be expressed as a convex combination of points from $\mathcal{A} \setminus \{\boldsymbol{x}_l\}$.
\end{definition}

Let \(\mathcal{E}\) be the set of extreme points of \(\mathcal{A}\), thereby \(\texttt{conv}(\mathcal{A}) = \texttt{conv}(\mathcal{E})\). Lemma~\ref{lemma:extreme point} \cite{Tropp2018} provides a practical criterion for identifying an extreme point (with \(\langle \cdot, \cdot \rangle\) denoting the inner product in \(\mathbb{R}^m\)):

\begin{lemma}
\label{lemma:extreme point}
Assume \( \boldsymbol{v} \in \mathbb{R}^m \setminus \{ \boldsymbol{0} \} \). For a point \( \boldsymbol{x}_l \in \mathcal{A} \), if it uniquely satisfies
\begin{equation}
\label{condition:argmax}
    \boldsymbol{x}_l = \operatorname*{\textit{arg}\,\textit{max}}_{i \in \mathcal{I}_\mathcal{A}} \langle \boldsymbol{x}_i, \boldsymbol{v} \rangle,
\end{equation}
then \( \boldsymbol{x}_l \) is an extreme point of \(\mathcal{A}\).
\end{lemma}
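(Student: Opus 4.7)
The plan is to argue by contradiction, leveraging the linearity of the inner product against the strict maximality hypothesis. Suppose, for contradiction, that $\boldsymbol{x}_l$ is not an extreme point of $\mathcal{A}$. By Definition~\ref{def:extreme point}, there exist coefficients $\lambda_i \geq 0$ with $\sum_{i \in \mathcal{I}_\mathcal{A} \setminus \{l\}} \lambda_i = 1$ such that
\begin{equation*}
\boldsymbol{x}_l = \sum_{i \in \mathcal{I}_\mathcal{A} \setminus \{l\}} \lambda_i \boldsymbol{x}_i.
\end{equation*}

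Next, I would take the inner product of both sides with the fixed vector $\boldsymbol{v}$. By bilinearity,
\begin{equation*}
\langle \boldsymbol{x}_l, \boldsymbol{v} \rangle = \sum_{i \in \mathcal{I}_\mathcal{A} \setminus \{l\}} \lambda_i \langle \boldsymbol{x}_i, \boldsymbol{v} \rangle.
\end{equation*}
The uniqueness clause in the hypothesis is the crucial ingredient here: because $\boldsymbol{x}_l$ is the \emph{unique} maximizer of $\langle \boldsymbol{x}_i, \boldsymbol{v}\rangle$ over $i \in \mathcal{I}_\mathcal{A}$, we have the strict inequality $\langle \boldsymbol{x}_i, \boldsymbol{v}\rangle < \langle \boldsymbol{x}_l, \boldsymbol{v}\rangle$ for every $i \neq l$. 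Using $\lambda_i \geq 0$ and $\sum_i \lambda_i = 1$, together with the fact that at least one coefficient must be strictly positive, this yields
\begin{equation*}
\sum_{i \neq l} \lambda_i \langle \boldsymbol{x}_i, \boldsymbol{v}\rangle < \sum_{i \neq l} \lambda_i \langle \boldsymbol{x}_l, \boldsymbol{v}\rangle = \langle \boldsymbol{x}_l, \boldsymbol{v}\rangle,
\end{equation*}
contradicting the previous equality and completing the proof.

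The argument is short and relies on no heavy machinery; the only subtlety worth being careful about is justifying the \emph{strict} inequality in the final display. This requires both (i) the uniqueness in the $\arg\max$ hypothesis (to get strict inequality for each term) and (ii) the observation that the convex combination is nontrivial in the sense that at least one $\lambda_i$ is positive, which follows from $\sum \lambda_i = 1$. I do not expect any genuine obstacle beyond articulating these two points cleanly; note that the hypothesis $\boldsymbol{v} \neq \boldsymbol{0}$ is not even needed for the contradiction itself, but is natural to include since otherwise the $\arg\max$ would be trivially non-unique when $|\mathcal{A}| > 1$.
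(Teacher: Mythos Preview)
Your argument is correct. It is, at its core, the same inequality the paper exploits---namely that the convex weights $\lambda_i$ with $\sum_{i\neq l}\lambda_i=1$ force $\sum_{i\neq l}\lambda_i\langle\boldsymbol v,\boldsymbol x_i\rangle<\langle\boldsymbol v,\boldsymbol x_l\rangle$ whenever $\boldsymbol x_l$ is the unique maximizer---but you reach the contradiction directly. The paper instead embeds the question in an auxiliary $\ell_1$-slack minimization (Model~\eqref{eq:extremproof}): it writes $\boldsymbol s_l+\sum_{i\neq l}\lambda_i\boldsymbol x_i=\boldsymbol x_l$, takes the inner product with $\boldsymbol v$ to obtain $\langle\boldsymbol v,\boldsymbol s_l\rangle+\sum_{i\neq l}\lambda_i\langle\boldsymbol v,\boldsymbol x_i\rangle=\langle\boldsymbol v,\boldsymbol x_l\rangle$, applies the same strict inequality to conclude $\langle\boldsymbol v,\boldsymbol s_l^*\rangle>0$, and hence $\alpha_l>0$. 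Your version strips away this slack-variable scaffolding and is the cleaner presentation for this lemma; the paper's formulation buys nothing extra here, though it echoes the optimization-based viewpoint used elsewhere in the manuscript. Your closing remark that $\boldsymbol v\neq\boldsymbol 0$ is needed only to make the uniqueness hypothesis non-vacuous is also accurate.
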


An alternative proof of Lemma~\ref{lemma:extreme point} is provided in Appendix~\ref{lemmaproof}, 
showing that any $\boldsymbol{x}_l \in \mathcal{A}$ satisfying condition~\eqref{condition:argmax} cannot be represented as a convex combination of the other points in 
\(\mathcal{A} \setminus \{\boldsymbol{x}_l\}\).

Any subset \(\mathcal{S} \subseteq \mathcal{A}\) induces a convex hull, denoted by \(\texttt{conv}(\mathcal{S})\), which satisfies \(\texttt{conv}(\mathcal{S}) \subseteq \texttt{conv}(\mathcal{A})\). To define this convex hull, consider the Euclidean distance from a point \(\boldsymbol{z} \in \mathbb{R}^m\) to \(\texttt{conv}(\mathcal{S})\), denoted by \(d(\boldsymbol{z}, \mathcal{S})\). The squared distance is obtained by solving the QP problem defined in Model~\eqref{model:qp}, where $\mathcal{R} \subseteq \mathcal{A}$ denotes the reference set of points.

\begin{equation}
\label{model:qp}
\begin{aligned}
d(\boldsymbol{z}, \mathcal{R})^2
= & \min_{\lambda_i}
\left\| \boldsymbol{z} - \sum_{i \in \mathcal{I}_\mathcal{R}} \lambda_i \boldsymbol{x}_i \right\|^2 \\
& \text{s.t. } \sum_{i \in \mathcal{I}_\mathcal{R}} \lambda_i = 1,\;
\lambda_i \ge 0 \; \forall i
\end{aligned}
\end{equation}

Model~\eqref{model:qp} is equivalent to projecting the point $\boldsymbol{z}$ onto $\texttt{conv}(\mathcal{R})$. Projection onto a convex set is a convex optimization problem \cite{boyd2004convex}. Several approaches can be employed to solve Model~\eqref{model:qp}, such as interior-point methods~\cite{boyd2004convex}, which typically have a time complexity\footnote{Complexity varies by method; see relevant studies on QP solution methods, such as \cite{cai2013complexity, boudjellal2022complexity, wu2025eiqp}, for details.} of $O(|\mathcal{R}|^{p} \log(1/\epsilon))$, where $p = 4$ corresponds to the worst-case bound~\cite{ye1989extension, boyd2004convex} and $\epsilon$ denotes the target solution accuracy.

Let \( \lambda_i^{*} \) denote the optimal solution to Model~\eqref{model:qp}. The distance \( d(\boldsymbol{z}, \mathcal{R}) \) is then given by
\[
d(\boldsymbol{z}, \mathcal{R}) = \left\| \boldsymbol{z} - \sum_{i \in \mathcal{I}_\mathcal{R}} \lambda^*_i \boldsymbol{x}_i \right\|.
\]
Note that \( \boldsymbol{z} \in \texttt{conv}(\mathcal{R}) \) if and only if
\[
d(\boldsymbol{z}, \mathcal{R}) = 0.
\]
It is evident that 
\begin{equation*}
d(\boldsymbol{x}_l, \mathcal{A}) = d(\boldsymbol{x}_l, \mathcal{E}) = 0, \quad \forall l \in \mathcal{I}_\mathcal{A}.
\end{equation*}
This implies that the distance from every \( \boldsymbol{x}_l \) in \( \mathcal{A} \) to \( \texttt{conv}(\mathcal{A}) \) and \( \texttt{conv}(\mathcal{E}) \) is zero.

Table~\ref{tab:instance} presents a two-dimensional point set used for illustration in the subsequent sections. 
Each point \( \boldsymbol{x}_i \), labeled \( i = \mathsf{1}, \dots, \mathsf{9} \), is referred to as Point~\( i \). 
The set of extreme points is 
\(
\mathcal{E} = \{\boldsymbol{x}_1, \boldsymbol{x}_2, \boldsymbol{x}_3, \boldsymbol{x}_4, \boldsymbol{x}_5, \boldsymbol{x}_6, \boldsymbol{x}_9\},
\)
representing the convex hull generated by \( \mathcal{A} \), as shown in Figure~\ref{fig:complete_convexhull}.

\begin{table}[ht]
    \centering
    \caption{A Two-Dimensional Point Set}
    \begin{tabular}{c c c c c c c c c c}
        \toprule
        Point & $\mathsf{1}$  & $\mathsf{2}$  & $\mathsf{3}$  & $\mathsf{4}$  & $\mathsf{5}$  & $\mathsf{6}$ & $\mathsf{7}$ & $\mathsf{8}$  & $\mathsf{9}$  \\
        \midrule
        $x_{1}$ & 10 & 72 & 40 & 46 & 32 & 71 & 34 & 40 & 62 \\
        $x_{2}$ & 26 & 20 & 1  & 76 & 72 & 36 & 66 & 38 & 69 \\
        \bottomrule
    \end{tabular}
    \label{tab:instance}
\end{table}

\begin{figure}[ht]
    \centering
    \includegraphics[width=0.45\textwidth]{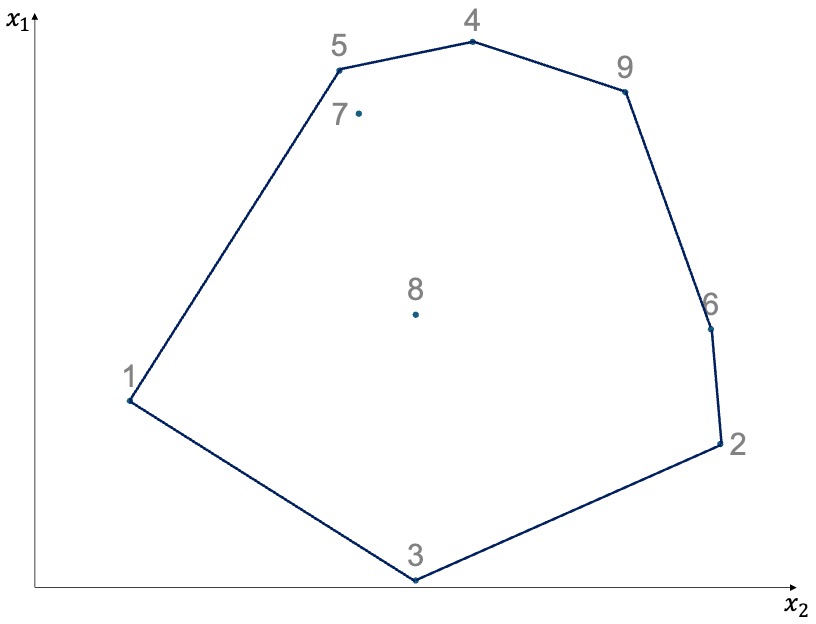}
    \caption{The Exact Convex Hull}
    \label{fig:complete_convexhull}
\end{figure}

\subsection*{Initializing a Subset of Extreme Points}
According to Lemma~\ref{lemma:extreme point}, an extreme point can be identified by applying any nonzero vector \( \boldsymbol{v} \). For instance, \( \boldsymbol{v} \) may be chosen as one of the column vectors of the identity matrix \( \mathbf{I}_m \), which enables the detection of extreme points that highlight extremal features in each dimension. Let \( I_c \) denote the \( c \)-th column vector of \( \mathbf{I}_m \). Considering the two-dimensional example in Table~\ref{tab:instance}, the corresponding extreme points can be identified as\footnote{This initialization strategy can be traced back to early works such as \cite{graham1972efficient, jarvis1973identification, barber1996quickhull}.}
\[
\boldsymbol{x}_1 = \operatorname*{\textit{arg\,max}}_{i \in \mathcal{I}_\mathcal{A}} \langle \boldsymbol{x}_i, -I_1 \rangle, \quad
\boldsymbol{x}_2 = \operatorname*{\textit{arg\,max}}_{i \in \mathcal{I}_\mathcal{A}} \langle \boldsymbol{x}_i, I_1 \rangle,
\]
\[
\boldsymbol{x}_3 = \operatorname*{\textit{arg\,max}}_{i \in \mathcal{I}_\mathcal{A}} \langle \boldsymbol{x}_i, -I_2 \rangle, \quad
\boldsymbol{x}_4 = \operatorname*{\textit{arg\,max}}_{i \in \mathcal{I}_\mathcal{A}} \langle \boldsymbol{x}_i, I_2 \rangle.
\]

Let $\mathcal{E}' \subseteq \mathcal{E}$ denote a subset of extreme points, where 
$\mathcal{E}' = \{\boldsymbol{x}_1, \boldsymbol{x}_2, \boldsymbol{x}_3, \boldsymbol{x}_4\}$ 
represents a partial convex hull, as shown in Figure~\ref{fig:instance}.

\begin{figure}[ht]
    \centering
    \includegraphics[width=0.45\textwidth]{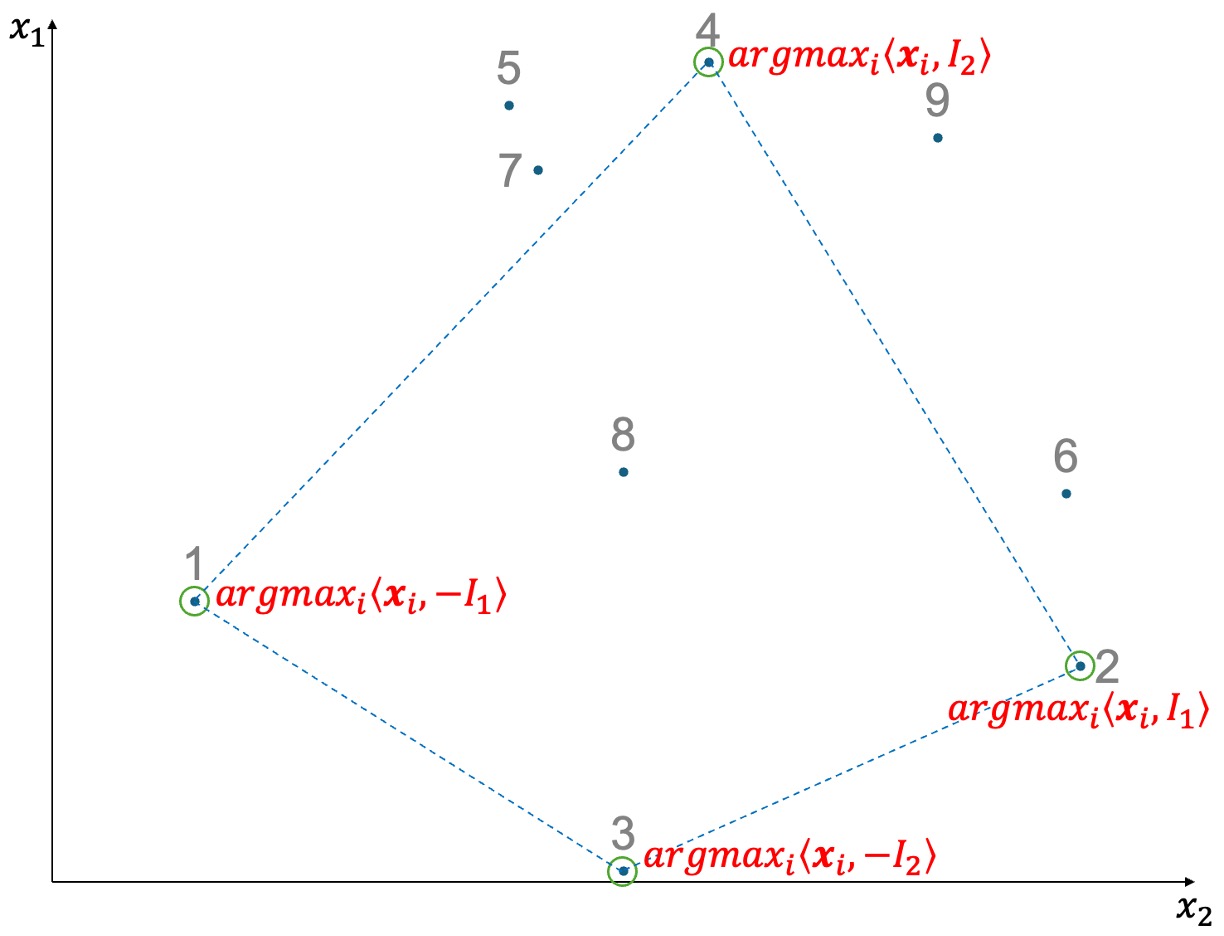}
    \caption{Partial Convex Hull Initialization}
    \label{fig:instance}
\end{figure}

Although a random selection of $\boldsymbol{v}$ can reveal a subset $\mathcal{E}' \subseteq \mathcal{E}$, determining the complete set $\mathcal{E}$ remains challenging. The following section presents the proposed approach for identifying $\mathcal{E}$ in its entirety. It should be noted that $\mathcal{E}$ is not necessarily unique.

\section{Exact Convex Hull Construction}
Solving Model~\eqref{model:qp} serves as the basis in the proposed method. Since the computational complexity depends on \( |\mathcal{R}| \), it is desirable to minimize its size. A key requirement is that \( \mathcal{R} \) must enclose the point \( \boldsymbol{x}_l \), that is, \( d(\boldsymbol{x}_l, \mathcal{R}) = 0 \). Algorithm~\ref{algorithm:ConstructingConvexHull} introduces an iterative procedure to construct a compact reference set \( \mathcal{R}_l \) for each point \( \boldsymbol{x}_l \), until the enclosure condition \( d(\boldsymbol{x}_l, \mathcal{R}_l) = 0 \) is fulfilled.

\begin{algorithm}[htbp]
\caption{Constructing Convex Hull}
\label{algorithm:ConstructingConvexHull}
\begin{algorithmic}[1]  
\Require $\mathcal{A}$
\Ensure $\mathcal{E}$
\State initialize $\mathcal{E}'$
\State $\mathcal{T} \gets \mathcal{A} \setminus \mathcal{E}'$
\While{$\mathcal{T} \neq \emptyset$}
    \State select $\boldsymbol{x}_l \in \mathcal{T}$
    \State initialize $\mathcal{R}_{l}$
    \State $d(\boldsymbol{x}_l,\mathcal{R}_l) \gets \text{solve Model~\eqref{model:qp}}$
    \While{$d(\boldsymbol{x}_l,\mathcal{R}_l) > 0$}
        \State $\boldsymbol{v}^{*} \gets \boldsymbol{x}_l - \sum_{i \in \mathcal{I}_{\mathcal{R}_l}} \lambda_i^{*}\boldsymbol{x}_i$
        \State $\boldsymbol{x}_{i^{etm}} \gets \operatorname*{\textit{arg}\,\textit{max}}_{i \in \mathcal{I}_\mathcal{A}} \langle \boldsymbol{x}_i, \boldsymbol{v}^{*} \rangle$
        \State $\mathcal{R}_l \gets \mathcal{R}_l \cup \{\boldsymbol{x}_{i^{etm}}\}$
        \State $d(\boldsymbol{x}_l,\mathcal{R}_l) \gets \text{solve Model~\eqref{model:qp}}$
        \State $\mathcal{T} \gets \mathcal{T} \setminus \{\boldsymbol{x}_{i^{etm}}\}$
    \EndWhile
    \State $\mathcal{T} \gets \mathcal{T} \setminus \{\boldsymbol{x}_l\}$
    \State $\mathcal{E}' \gets \mathcal{E}' \cup \mathcal{R}_l$
\EndWhile
\State $\mathcal{E} \gets \mathcal{E}'$
\State \Return $\mathcal{E}$
\end{algorithmic}
\end{algorithm}

Algorithm~\ref{algorithm:ConstructingConvexHull} takes as input a finite set of data points $\mathcal{A}$ and identifies its set of extreme points $\mathcal{E}$. In line~1, $\mathcal{E}'$ can be initialized using the previously introduced identity matrix $\mathbf{I}_m$ or any arbitrary non-zero vector $\boldsymbol{v}$. In line~5, $\mathcal{R}_l$ is initialized by randomly selecting a data point from $\mathcal{E}'$. These random initializations do not affect the convergence of the while loop in lines~7--13 of Algorithm~\ref{algorithm:ConstructingConvexHull}, as guaranteed by Theorem~\ref{theorem:convergence}.

\begin{theorem}
\label{theorem:convergence}
For any initial reference set \(\mathcal{R}_l \subseteq \mathcal{A}\) corresponding to \(\boldsymbol{x}_l \in \mathcal{A}\), the procedure outlined in lines 7-13 of Algorithm~\ref{algorithm:ConstructingConvexHull} ensures the value \( d(\boldsymbol{x}_l,\mathcal{R}_l) \) converges to zero.
\end{theorem}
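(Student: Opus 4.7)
The plan is to show that, while $d(\boldsymbol{x}_l,\mathcal{R}_l)>0$, each iteration strictly enlarges $\mathcal{R}_l$ by adding a point not already in it; since $\mathcal{A}$ is finite, the loop must exit, and the only exit condition is $d(\boldsymbol{x}_l,\mathcal{R}_l)=0$. The key analytical ingredient is the first-order optimality condition for the projection QP~\eqref{model:qp}.

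First I would denote $\boldsymbol{p}^{*} = \sum_{i\in\mathcal{I}_{\mathcal{R}_l}}\lambda^{*}_i\boldsymbol{x}_i$, so that $\boldsymbol{v}^{*} = \boldsymbol{x}_l-\boldsymbol{p}^{*}$ and $d(\boldsymbol{x}_l,\mathcal{R}_l)^2=\|\boldsymbol{v}^{*}\|^2$. Since $\boldsymbol{p}^{*}$ is the Euclidean projection of $\boldsymbol{x}_l$ onto the convex set $\texttt{conv}(\mathcal{R}_l)$, the variational inequality characterization of projection gives $\langle \boldsymbol{v}^{*},\boldsymbol{y}-\boldsymbol{p}^{*}\rangle\leq 0$ for every $\boldsymbol{y}\in\texttt{conv}(\mathcal{R}_l)$. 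Specialising to the vertices $\boldsymbol{y}=\boldsymbol{x}_i$, $i\in\mathcal{I}_{\mathcal{R}_l}$, yields the bound
\[
\langle \boldsymbol{x}_i,\boldsymbol{v}^{*}\rangle \;\leq\; \langle \boldsymbol{p}^{*},\boldsymbol{v}^{*}\rangle \qquad\text{for every }i\in\mathcal{I}_{\mathcal{R}_l}.
\]

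Second I would use the fact $\boldsymbol{x}_l\in\mathcal{A}\subseteq\texttt{conv}(\mathcal{A})$ to write $\boldsymbol{x}_l=\sum_{i\in\mathcal{I}_\mathcal{A}}\mu_i\boldsymbol{x}_i$ for some convex weights $\mu_i$ (trivially, the point representing itself suffices). Then
\[
\sum_{i\in\mathcal{I}_\mathcal{A}}\mu_i\,\langle\boldsymbol{x}_i,\boldsymbol{v}^{*}\rangle \;=\; \langle\boldsymbol{x}_l,\boldsymbol{v}^{*}\rangle \;=\; \langle\boldsymbol{p}^{*},\boldsymbol{v}^{*}\rangle + \|\boldsymbol{v}^{*}\|^2.
\]
Assuming $d(\boldsymbol{x}_l,\mathcal{R}_l)>0$ gives $\|\boldsymbol{v}^{*}\|^2>0$, so the weighted average $\sum_i\mu_i\langle\boldsymbol{x}_i,\boldsymbol{v}^{*}\rangle$ strictly exceeds $\langle\boldsymbol{p}^{*},\boldsymbol{v}^{*}\rangle$, forcing at least one index $i'\in\mathcal{I}_\mathcal{A}$ with $\langle\boldsymbol{x}_{i'},\boldsymbol{v}^{*}\rangle > \langle\boldsymbol{p}^{*},\boldsymbol{v}^{*}\rangle$. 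By definition of $\boldsymbol{x}_{i^{etm}}$ as the $\arg\max$ over all of $\mathcal{I}_\mathcal{A}$, we obtain
\[
\langle\boldsymbol{x}_{i^{etm}},\boldsymbol{v}^{*}\rangle \;\geq\; \langle\boldsymbol{x}_{i'},\boldsymbol{v}^{*}\rangle \;>\; \langle\boldsymbol{p}^{*},\boldsymbol{v}^{*}\rangle \;\geq\; \langle\boldsymbol{x}_i,\boldsymbol{v}^{*}\rangle\quad\forall i\in\mathcal{I}_{\mathcal{R}_l},
\]
which proves $\boldsymbol{x}_{i^{etm}}\notin\mathcal{R}_l$. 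Hence line~9 always grows $\mathcal{R}_l$ by one genuinely new element whenever the loop body executes.

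Third, finiteness closes the argument: after at most $|\mathcal{A}|-|\mathcal{R}_l^{\text{init}}|$ iterations either $d(\boldsymbol{x}_l,\mathcal{R}_l)=0$ already, or $\mathcal{R}_l=\mathcal{A}$; in the latter case $\boldsymbol{x}_l\in\mathcal{A}=\texttt{conv}(\mathcal{R}_l)$ trivially forces $d(\boldsymbol{x}_l,\mathcal{R}_l)=0$. Either way the loop exits with the distance equal to zero, establishing convergence in finitely many steps.

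The main obstacle, in my view, is the first step: correctly invoking the projection optimality condition and turning it into the strict separation inequality $\langle\boldsymbol{x}_{i^{etm}},\boldsymbol{v}^{*}\rangle>\langle\boldsymbol{x}_i,\boldsymbol{v}^{*}\rangle$ for $i\in\mathcal{I}_{\mathcal{R}_l}$. Once that strict inequality is in hand, monotonic growth of $\mathcal{R}_l$ together with $|\mathcal{A}|<\infty$ finishes the argument immediately; care is only needed to handle the degenerate case where $\boldsymbol{x}_l$ itself already lies in $\mathcal{R}_l$, but then $d(\boldsymbol{x}_l,\mathcal{R}_l)=0$ and the hypothesis of the iteration fails, so this case is trivial.
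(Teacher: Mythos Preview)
Your argument is correct and, in fact, tighter than the paper's own proof, but the route is genuinely different. The paper does not use the projection variational inequality nor the set-growth/finiteness argument; instead it shows that each pass of the loop \emph{strictly decreases} the distance. Concretely, the paper takes $\boldsymbol{x}_{i^*}=\sum_i\lambda_i^*\boldsymbol{x}_i$, forms $\boldsymbol{v}^{\#}=\boldsymbol{x}_{i^{etm}}-\boldsymbol{x}_{i^*}$, argues $\langle\boldsymbol{v}^*,\boldsymbol{v}^{\#}\rangle>0$, and then uses the triangle $\boldsymbol{x}_l,\boldsymbol{x}_{i^*},\boldsymbol{x}_{i^{etm}}$ together with a $\sin\theta$ computation to conclude $d(\boldsymbol{x}_l,\mathcal{R}_l^{k+1})\le\|\boldsymbol{v}^*\|\sin\theta<\|\boldsymbol{v}^*\|=d(\boldsymbol{x}_l,\mathcal{R}_l^{k})$. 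From this strict decrease the paper simply asserts $\lim_{k\to\infty}d=0$. Your approach trades the per-step distance estimate for the observation that $\boldsymbol{x}_{i^{etm}}\notin\mathcal{R}_l$, which with $|\mathcal{A}|<\infty$ yields termination in at most $|\mathcal{A}|$ iterations; this is more elementary and actually closes a gap the paper leaves open (strict monotonicity alone does not force the limit to be zero). What the paper's approach buys, conversely, is the quantitative monotone-decrease picture it later illustrates in Appendix~\ref{app:monodecrease}. Your derivation of the key strict inequality is also cleaner: you get $\langle\boldsymbol{x}_{i^{etm}},\boldsymbol{v}^*\rangle>\langle\boldsymbol{p}^*,\boldsymbol{v}^*\rangle$ directly from $\boldsymbol{x}_l\in\mathcal{A}$ and $\|\boldsymbol{v}^*\|>0$, without the paper's extra assumption that the $\arg\max$ is attained uniquely.
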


\begin{proof}[Proof]
Let \(\mathcal{R}_l^k\) denote the reference set for \(\boldsymbol{x}_l\) at iteration \(k\), and assume the distance from \(\boldsymbol{x}_l\) to \(\texttt{conv}(\mathcal{R}_l^k)\) as  
\begin{equation*}
d(\boldsymbol{x}_l,\mathcal{R}_l^k) > 0.
\end{equation*}
Rewrite the operation in line 10 as  
\begin{equation*}
\mathcal{R}_l^{k+1} \gets \mathcal{R}_l^k \cup \{\boldsymbol{x}_{i^{etm}}\}.
\end{equation*}
According to line 8, it holds that  
\[
\boldsymbol{v}^* = \boldsymbol{x}_l - \sum_{i \in \mathcal{I}_{\mathcal{R}_l^k}} \lambda_i^* \boldsymbol{x}_i.
\]
where \(\lambda_i^*\) are the optimal coefficients solving Model~\eqref{model:qp} with \(\mathcal{R} \gets \mathcal{R}_l^k\). \\
In line~9, if \(\boldsymbol{x}_{i^{etm}} = \boldsymbol{x}_l\), then
\[
d(\boldsymbol{x}_l, \mathcal{R}_l^{k+1}) = 0,
\]
which guarantees termination of the while loop.
Otherwise, if \(\boldsymbol{x}_{i^{etm}} \ne \boldsymbol{x}_l\), since \(\boldsymbol{x}_{i^{etm}}\) uniquely satisfies
\[
\boldsymbol{x}_{i^{etm}} = \underset{i \in \mathcal{I}_\mathcal{A}}{\arg\max} \, \langle \boldsymbol{v}^{*}, \boldsymbol{x}_i \rangle,
\]
it holds that 
\begin{equation}
\boldsymbol{v}^{*T} \boldsymbol{x}_{i^{etm}} > \boldsymbol{v}^{*T} \boldsymbol{x}_i, \quad \forall i \in \mathcal{I}_{\mathcal{R}_l^k}.
\label{eq:vstar_xetm}
\end{equation}
Multiplying both sides of \eqref{eq:vstar_xetm} by \(\lambda_i^*\) and summing over all \(i \in \mathcal{I}_{\mathcal{R}_l^k}\) yields  
\[
\sum_{i \in \mathcal{I}_{\mathcal{R}_l^k}} \lambda_i^* \boldsymbol{v}^{*T} \boldsymbol{x}_{i^{etm}} > \sum_{i \in \mathcal{I}_{\mathcal{R}_l^k}} \lambda_i^* \boldsymbol{v}^{*T} \boldsymbol{x}_i.
\]
Using the fact that \(\sum_{i \in \mathcal{I}_{\mathcal{R}_l^k}} \lambda_i^* = 1\), it follows that  
\[
\boldsymbol{v}^{*T} \boldsymbol{x}_{i^{etm}} > \boldsymbol{v}^{*T} \sum_{i \in \mathcal{I}_{\mathcal{R}_l^k}} \lambda_i^* \boldsymbol{x}_i.
\]  
Rearranging terms yields  
\begin{equation}
\boldsymbol{v}^{*T} \left( \boldsymbol{x}_{i^{etm}} - \sum_{i \in \mathcal{I}_{\mathcal{R}_l^k}} \lambda_i^* \boldsymbol{x}_i \right) > 0.
\label{eq:vstar_vsharp}
\end{equation}
Define  
\[
\boldsymbol{x}_{i^*} = \sum_{i \in \mathcal{I}_{\mathcal{R}_l^k}} \lambda_i^* \boldsymbol{x}_i,
\]
and  
\[
\boldsymbol{v}^{\#} = \boldsymbol{x}_{i^{etm}} - \boldsymbol{x}_{i^*}.
\]
With these definitions, inequality \eqref{eq:vstar_vsharp} can be rewritten as  
\[
\langle \boldsymbol{v}^*, \boldsymbol{v}^\# \rangle > 0.
\]
Let \(\theta\) denote the angle between vectors \(\boldsymbol{v}^{*}\) and \(\boldsymbol{v}^{\#}\), where  
\begin{equation*}
0 \leq \theta \leq \pi.
\end{equation*}
The points \(\boldsymbol{x}_l\), \(\boldsymbol{x}_{i^*}\), and \(\boldsymbol{x}_{i^{etm}}\) form a triangle. Using the relation  
\[
\langle \boldsymbol{v}^*, \boldsymbol{v}^\# \rangle = \|\boldsymbol{v}^*\| \|\boldsymbol{v}^\#\| \cos \theta > 0,
\]
it follows that the angle \(\theta\) satisfies  
\[
0 < \sin \theta < 1.
\]
Define  
\[
\boldsymbol{x}_{i'} = \sum_{i \in \{i^*, i^{etm}\}} \lambda_i^{\prime *} \boldsymbol{x}_i,
\]
where \(\lambda_i^{\prime *}\) denotes the solution to Model~\eqref{model:qp} with \(\mathcal{R} \gets \{\boldsymbol{x}_{i^*}, \boldsymbol{x}_{i^{etm}}\}\). It follows that \(\boldsymbol{x}_{i'} \in \texttt{conv}(\mathcal{R}_l^{k+1})\). Define \(\boldsymbol{v}' = \boldsymbol{x}_l - \boldsymbol{x}_{i'}\). Then,  
\[
\|\boldsymbol{v}'\| = \|\boldsymbol{v}^*\| \sin \theta < \|\boldsymbol{v}^*\|.
\]
Further, define  
\[
\boldsymbol{x}_{i^{**}} = \sum_{i \in \mathcal{I}_{\mathcal{R}_l^k}} \lambda_i^{**} \boldsymbol{x}_i,
\]
where \(\lambda_i^{**}\) is the solution to Model~\eqref{model:qp} with \(\mathcal{R} \gets \mathcal{R}_l^{k+1}\), and set  
\[
\boldsymbol{v}^{**} = \boldsymbol{x}_l - \boldsymbol{x}_{i^{**}}.
\]
Since  
\[
\texttt{conv}(\{\boldsymbol{x}_{i^*}, \boldsymbol{x}_{i^{etm}}\}) \subseteq \texttt{conv}(\mathcal{R}_l^{k+1}),
\]
it follows that  
\[
\|\boldsymbol{v}^{**}\| \leq \|\boldsymbol{v}'\|.
\]
Therefore
\begin{equation*}
d(\boldsymbol{x}_l,\mathcal{R}_l^{k+1}) = \|\boldsymbol{v}^{**}\| < \|\boldsymbol{v}^{*}\| = d(\boldsymbol{x}_l,\mathcal{R}_l^k) 
\end{equation*}
Thus, by induction,  
\begin{equation*}
\lim_{k \to \infty} d(\boldsymbol{x}_l,\mathcal{R}_l^k) = 0.
\end{equation*}
Therefore, the while loop in lines 7-13 of Algorithm~\ref{algorithm:ConstructingConvexHull} converges.
\end{proof}
\vspace{10pt}

As the iteration progresses, \(\mathcal{R}_l\) grows by incorporating \(\boldsymbol{x}_{i^{etm}}\) at line 9, resulting in a rapid decrease in the size of the corresponding exterior set \[
\mathcal{A}^{\mathrm{ex}}(\mathcal{R}_l)
= \bigl\{\, \boldsymbol{x}_i \in \mathcal{A} \;\big|\; \boldsymbol{x}_i \notin \operatorname{conv}(\mathcal{R}_l) \,\bigr\}.
\]
Instead of decreasing by one element at a time, this reduction is anticipated to follow a more pronounced pattern, i.e.,  
\[
|\mathcal{A}^{\mathrm{ex}}(\mathcal{R}_l^{k+1})| - |\mathcal{A}^{\mathrm{ex}}(\mathcal{R}_l^k)| \gg 1
\]
for large \( n \). This accelerated contraction implies that the procedure in lines 7-13 may converge within a finite number of iterations. A visual representation of the monotonic decrease in the distance \( d(\boldsymbol{x}_l, \mathcal{R}_l^k) \) can be found in Appendix~\ref{app:monodecrease}.

\begin{figure*}[t]
    \centering
    \begin{subfigure}[b]{0.495\textwidth}
        \centering
        \includegraphics[width=\textwidth]{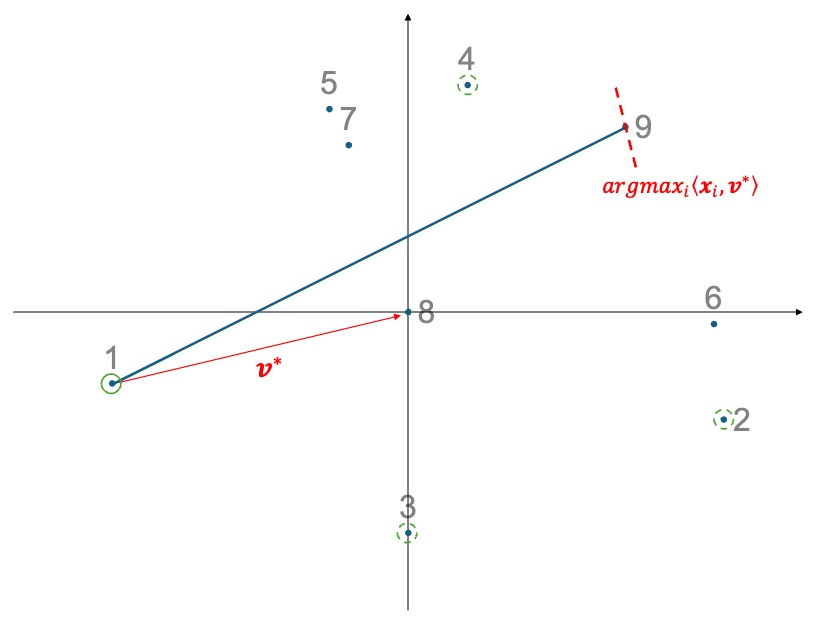}
        \caption{}
        \label{subfig:converg01}
    \end{subfigure}
    \begin{subfigure}[b]{0.495\textwidth}
        \centering
        \includegraphics[width=\textwidth]{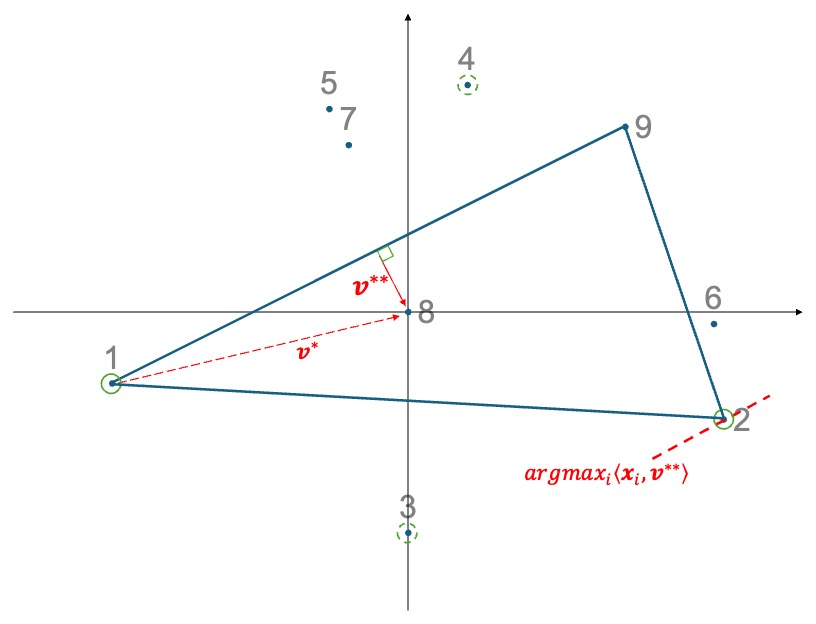}
        \caption{}
        \label{subfig:converg02}
    \end{subfigure}
    \caption{Implementation for Point~$\mathsf{8}$}
    \label{fig:converge}
\end{figure*}

Figure~\ref{fig:converge} illustrates the convergence process within a finite number of iterations during the processing of \( \boldsymbol{x}_8 \). For clarity, assume without loss of generality that \( \mathcal{R}_8 \) is initialized as \( \{\boldsymbol{x}_1\} \). In this scenario,  
\[
\boldsymbol{x}_9 = \operatorname*{\textit{arg}\,\textit{max}}_{i \in \mathcal{I}_\mathcal{A}} \langle \boldsymbol{x}_i, \boldsymbol{v}^* \rangle,
\]  
followed by  
\[
\boldsymbol{x}_2 = \operatorname*{\textit{arg}\,\textit{max}}_{i \in \mathcal{I}_\mathcal{A}} \langle \boldsymbol{x}_i, \boldsymbol{v}^{**} \rangle,
\]  
which yields  
\[
d(\boldsymbol{x}_8, \mathcal{R}_8) = 0,
\]
where  
\[
\mathcal{R}_8 = \{\boldsymbol{x}_1, \boldsymbol{x}_9, \boldsymbol{x}_2\}.
\]

This convergence behavior is applicable to any \( \boldsymbol{x}_l \in \mathcal{A} \), regardless of the initially selected \( \mathcal{R}_l \).

\begin{corollary}  
\label{theorem:convexhullguaranteedbyconvergence}  
Given the convergence of the while loop in lines 7-13 of Algorithm~\ref{algorithm:ConstructingConvexHull}, the exact \(\texttt{conv}(\mathcal{E})\) is guaranteed to be obtained in line 17 of Algorithm~\ref{algorithm:ConstructingConvexHull}.  
\end{corollary}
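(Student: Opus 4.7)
The plan is to bootstrap from Theorem~\ref{theorem:convergence} together with Lemma~\ref{lemma:extreme point}: the former guarantees that, for each $\boldsymbol{x}_l$, the inner while loop terminates with $\boldsymbol{x}_l \in \texttt{conv}(\mathcal{R}_l)$, while the latter certifies that every newly adjoined point is an extreme point of $\mathcal{A}$. Combined, these two facts should yield both the hull identity $\texttt{conv}(\mathcal{E}') = \texttt{conv}(\mathcal{A})$ and, with a little more care, the set identity $\mathcal{E}' = \mathcal{E}$.

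First I would examine the bookkeeping of $\mathcal{T}$ in Algorithm~\ref{algorithm:ConstructingConvexHull}. The outer loop only exits when $\mathcal{T} = \emptyset$, so every $\boldsymbol{x} \in \mathcal{A}$ that is not in the initial $\mathcal{E}'$ is eventually removed either at line~13 (as some $\boldsymbol{x}_l$) or at line~11 (as some $\boldsymbol{x}_{i^{etm}}$). In the latter case, $\boldsymbol{x}$ is placed directly into $\mathcal{R}_l$ at line~9, hence into $\mathcal{E}'$ at line~14. In the former case, Theorem~\ref{theorem:convergence} supplies $d(\boldsymbol{x}_l, \mathcal{R}_l) = 0$ upon exit, so $\boldsymbol{x}_l \in \texttt{conv}(\mathcal{R}_l) \subseteq \texttt{conv}(\mathcal{E}')$ after line~14. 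Together with the trivial $\mathcal{E}' \subseteq \mathcal{A}$, this sandwiches $\texttt{conv}(\mathcal{A}) \subseteq \texttt{conv}(\mathcal{E}') \subseteq \texttt{conv}(\mathcal{A})$ and establishes the hull identity.

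To upgrade to the set-level identity $\mathcal{E}' = \mathcal{E}$, I would prove both inclusions separately. For $\mathcal{E}' \subseteq \mathcal{E}$: the points seeded into the initial $\mathcal{E}'$ are argmaxes along $\pm I_c$, and each subsequently adjoined $\boldsymbol{x}_{i^{etm}}$ is the argmax of $\langle \cdot, \boldsymbol{v}^{*} \rangle$ over $\mathcal{A}$, so Lemma~\ref{lemma:extreme point} pins each such point to $\mathcal{E}$. For the converse $\mathcal{E} \subseteq \mathcal{E}'$: any $\boldsymbol{x}_e \in \mathcal{E}$ is, by Definition~\ref{def:extreme point}, not expressible as a convex combination of the remaining elements of $\mathcal{A}$; hence whenever $\boldsymbol{x}_e$ is eventually selected as some $\boldsymbol{x}_l$, the fact that $\boldsymbol{x}_e \in \texttt{conv}(\mathcal{R}_l)$ with $\mathcal{R}_l \subseteq \mathcal{A}$ forces $\boldsymbol{x}_e \in \mathcal{R}_l$ and thence into $\mathcal{E}'$; should $\boldsymbol{x}_e$ have been absorbed earlier as some $\boldsymbol{x}_{i^{etm}}$, the conclusion is even more direct.

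The main obstacle is the uniqueness hypothesis in Lemma~\ref{lemma:extreme point}: to certify extremity of an argmax one needs a \emph{unique} maximizer, which is not automatic mid-iteration. I would discharge this either by imposing a deterministic tie-breaking rule that always selects a representative preserving extremity, or by a genericity/perturbation argument showing that ties only occur along shared supporting hyperplanes and therefore do not affect $\texttt{conv}(\mathcal{E}')$. A subsidiary detail is the step ``$\boldsymbol{x}_e \in \texttt{conv}(\mathcal{R}_l)$ with $\boldsymbol{x}_e$ extreme implies $\boldsymbol{x}_e \in \mathcal{R}_l$'', which follows directly from Definition~\ref{def:extreme point} once one notes $\mathcal{R}_l \subseteq \mathcal{A}$; this is routine once isolated.
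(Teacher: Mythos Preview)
Your hull-identity argument in the second paragraph is correct and is essentially the paper's proof: the paper also argues that convergence gives $\boldsymbol{x}_l \in \texttt{conv}(\mathcal{R}_l)$, that line~14 forces $\mathcal{R}_l \subseteq \mathcal{E}'$, and hence every point of $\mathcal{A}$ lies in $\texttt{conv}(\mathcal{E}')$, whence $\texttt{conv}(\mathcal{E}') = \texttt{conv}(\mathcal{E})$. Your case split on whether a point leaves $\mathcal{T}$ via line~11 or line~13 is in fact tidier than the paper's write-up, which asserts $d(\boldsymbol{x}_l,\mathcal{R}_l)=0$ ``for all $l\in\mathcal{I}_\mathcal{A}$'' without noting that some points are never selected as $\boldsymbol{x}_l$; your bookkeeping covers that gap.

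Where you diverge is in the third paragraph: you aim for the stronger set-level identity $\mathcal{E}'=\mathcal{E}$, whereas the corollary (and the paper's proof) only asserts the hull-level identity $\texttt{conv}(\mathcal{E}')=\texttt{conv}(\mathcal{E})$. The paper never invokes Lemma~\ref{lemma:extreme point} here and never claims that $\mathcal{E}'$ contains no redundant points. The obstacle you correctly flag---that Lemma~\ref{lemma:extreme point} requires a \emph{unique} maximizer, which neither the initialization nor line~8 guarantees---is precisely why the paper does not attempt $\mathcal{E}'\subseteq\mathcal{E}$. So your extra work is not needed for the corollary as stated, and the genuine difficulty you identify lives entirely in that surplus portion; you can simply drop it.
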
  

\begin{proof}[Proof]
Since the while loop has converged,  
\[
d(\boldsymbol{x}_l, \mathcal{R}_l) = 0, \quad \forall l \in \mathcal{I}_\mathcal{A},
\]
which implies  
\begin{equation}
\boldsymbol{x}_l \in \texttt{conv}(\mathcal{R}_l), \quad \forall l \in \mathcal{I}_\mathcal{A}.
\label{eq:x_in_convRl}
\end{equation}
From line 15 of Algorithm~\ref{algorithm:ConstructingConvexHull}, it follows that  
\begin{equation}
\mathcal{R}_l \subseteq \mathcal{E}', \quad \forall l \in \mathcal{I}_\mathcal{A}.
\label{eq:R_subset_Eprime}
\end{equation}  
Taking the convex hull on both sides of \eqref{eq:R_subset_Eprime} yields  
\begin{equation}
\texttt{conv}(\mathcal{R}_l) \subseteq \texttt{conv}(\mathcal{E}').
\label{eq:convRl_subset_convEprime}
\end{equation}
Combining \eqref{eq:x_in_convRl} and \eqref{eq:convRl_subset_convEprime} leads to the conclusion that  
\[
\boldsymbol{x}_l \in \texttt{conv}(\mathcal{E}'), \quad \forall l \in \mathcal{I}_\mathcal{A}.
\]  
This confirms that all \(\boldsymbol{x}_l \in \mathcal{A}\) are enclosed within \(\texttt{conv}(\mathcal{E}')\). Consequently,  
\begin{equation*}
\texttt{conv}(\mathcal{E}') = \texttt{conv}(\mathcal{E}),
\end{equation*}  
implying that the complete convex hull is constructed in line 17.  
\end{proof}

\section{Complexity Analysis}
The computational cost of Algorithm~\ref{algorithm:ConstructingConvexHull} primarily arises from the 
$\operatorname*{\textit{arg}\,\textit{max}}$ operation and the solution of Model~\eqref{model:qp}.

In line~9, the $\operatorname*{\textit{arg}\,\textit{max}}$ operation has a computational cost of $O(nm)$. 
Computing the inner product $\langle \boldsymbol{x}_i, \boldsymbol{v}^* \rangle$ for each $\boldsymbol{x}_i$ requires $O(m)$ operations, 
and since this must be performed for all $n$ points, the total cost for computing all inner products is 
$n \cdot O(m) = O(nm)$. 
Finding the maximum among these $n$ scalar values adds an additional $O(n)$ cost, 
which is asymptotically dominated by $O(nm)$. 
Therefore, the overall complexity of the $\operatorname*{\textit{arg}\,\textit{max}}$ operation is $O(nm)$.

For each \( \boldsymbol{x}_l \), solving Model~\eqref{model:qp} with \( \mathcal{R}_l \) as the reference set requires 
\( O(|\mathcal{R}_l|^{p} \log(1/\epsilon)) \).
Suppose that for each \( \boldsymbol{x}_l \), the set \( \mathcal{R}_l \) is randomly initialized with one extreme point. 
In the worst case, the condition \( d(\boldsymbol{x}_l, \mathcal{R}_l) = 0 \) is satisfied only when 
\( |\mathcal{R}_l| = |\mathcal{E}| \). 
Let \( h = |\mathcal{E}| \) denote the number of extreme points. 
Then, the total computational effort required to construct \( \mathcal{R}_l \) can be expressed as
\begin{align*}
\sum_{|\mathcal{R}_l|=1}^{h} ( &O(|\mathcal{R}_l|^{p} \log(1/\epsilon)) + O(nm) )
= \\
&\quad O\!\left( \sum_{|\mathcal{R}_l|=1}^{h} |\mathcal{R}_l|^{p} \log(1/\epsilon) \right)
+ h \cdot O(nm).
\end{align*}
Using the standard asymptotic relation for power sums,
\[
O\!\left(\sum_{|\mathcal{R}_l|=1}^{h} |\mathcal{R}_l|^{p} \log(1/\epsilon) \right) = O(h^{p+1}\log(1/\epsilon)),
\]
the total per-point complexity becomes \( O(h^{p+1}\log(1/\epsilon) + hnm) \).
Aggregating over all \( n \) points yields
\[
O\big(nh^{p+1}\log(1/\epsilon) + n^2 h m\big).
\]
In the worst case, where \( h = n \), the total complexity simplifies to
\[
O(n^{p+2}\log(1/\epsilon) + n^3 m) = O(n^{p+2}\log(1/\epsilon)),
\]
since \( O(n^{p+2}\log(1/\epsilon)) \) dominates for sufficiently large \( n \).

\section{Summary}
This study develops a polynomial-time algorithm for the computation of exact convex hulls in high-dimensional spaces.

Section 2 reviews the convex hull problem, and presents the lemma for identifying convex hull vertices (extreme points).

Section 3 presents the algorithm that guarantees convergence to the exact convex hull. For each input point, an iterative reference-set update strategy is employed: at each iteration, the associated QP problem incorporates a newly identified extreme point into its reference set until convergence is achieved (i.e., the optimal objective value reaches zero). Once all points have been processed, the complete convex hull is obtained.

Section~4 provides a detailed analysis of the worst-case time complexity of the proposed algorithm, demonstrating that it is polynomial and independent of the problem dimension.

As demonstrated, the method retains full accuracy with polynomial complexity in the dataset size \( n \), presenting potential to reduce computational time and space for convex hull identification in high-dimensional spaces.

\section*{Appendix}
\begin{appendix}

\subsection{Proof of Lemma \ref{lemma:extreme point}}
\label{lemmaproof}

\begin{proof}[Proof]
Consider the optimization problem represented in Model~\eqref{eq:extremproof}, where $\boldsymbol{s}_l$ is a vector of slacks with $s_{jl}$ as its $j$-th element:

\begin{equation}
\label{eq:extremproof}
\begin{alignedat}{2}
    \alpha_l = & \min_{\lambda_i, s_{jl}} \sum_{j=1}^{m} |s_{jl}| \\
    & \text{s.t.} \\
    & \quad \boldsymbol{s}_l + \sum_{i \neq l} \lambda_i \boldsymbol{x}_i = \boldsymbol{x}_l \quad && \text{(}\ref{eq:extremproof}\text{a)} \\
    & \quad \sum_{i \neq l} \lambda_i = 1 \quad && \text{(}\ref{eq:extremproof}\text{b)} \\
    & \quad \lambda_i \geq 0, \forall i \quad && \text{(}\ref{eq:extremproof}\text{c)}
\end{alignedat}
\end{equation}
\\
Let \(\lambda_i^*\) for \(i \neq l\) and \(\boldsymbol{s}_l^{*} = (s_{1l}^*, \dots, s_{ml}^*)\) denote the optimal solution to Model~\eqref{eq:extremproof}, \(\alpha_l = \sum_{j=1}^{m} |s_{jl}^*|\). If \(\boldsymbol{x}_l\) cannot be represented as a convex combination of the other points, then \(\alpha_l > 0\) holds.\\ 
Expanding condition \text{(}\ref{eq:extremproof}\text{a)} component-wise yields
\[
s_{jl} + \sum_{i \neq l} \lambda_i x_{ji} = x_{jl}, \quad j=1, \dots, m.
\]
Multiplying each equation by \(v_j \in \mathbb{R}\) and summing over \(j = 1, \dots, m\) yields
\[
\sum_{j=1}^{m} v_j s_{jl} + \sum_{i \neq l} \lambda_i \sum_{j=1}^{m} v_j x_{ji} = \sum_{j=1}^{m} v_j x_{jl}.
\]
Let $\boldsymbol{v} = (v_1, \dots, v_m)^T \neq \boldsymbol{0}$. Using the inner product notation, the above can be expressed as:
\begin{equation}
\label{eq:inner-product}
    \langle \boldsymbol{v}, \boldsymbol{s}_l \rangle + \sum_{i \neq l} \lambda_i \langle \boldsymbol{v}, \boldsymbol{x}_i \rangle = \langle \boldsymbol{v}, \boldsymbol{x}_l \rangle.
\end{equation}
If $\boldsymbol{x}_l$ uniquely satisfies condition~\eqref{condition:argmax}, then by the constraints \text{(}\ref{eq:extremproof}\text{b)} and \text{(}\ref{eq:extremproof}\text{c)}:
\begin{equation*}
    \sum_{i \neq l} \lambda_i \langle \boldsymbol{v}, \boldsymbol{x}_i \rangle < \langle \boldsymbol{v}, \boldsymbol{x}_l \rangle.
\end{equation*}
It follows that:
\begin{equation*}
    \sum_{i \neq l} \lambda_i^* \langle \boldsymbol{v}, \boldsymbol{x}_i \rangle < \langle \boldsymbol{v}, \boldsymbol{x}_l \rangle.
\end{equation*}
To satisfy Equation~\eqref{eq:inner-product}, it follows that:
\begin{equation*}
    \langle \boldsymbol{v}, \boldsymbol{s}_l^* \rangle > 0.
\end{equation*}
Thus, $\boldsymbol{s}_l^* \neq \boldsymbol{0}$, and consequently:
\begin{equation*}
    \alpha_l = \sum_{j=1}^{m} |s_{jl}^*| > 0.
\end{equation*}
This proves that $\boldsymbol{x}_l$ is an extreme point if it uniquely satisfies condition~\eqref{condition:argmax}.
\end{proof}

\subsection{Monotonic Distance Decrease}
\label{app:monodecrease}

As shown in Figure~\ref{fig:convergent}, consider the processing of Point $\mathsf{c}$, whose reference set consists of Point $\mathsf{a}$ and Point $\mathsf{b}$. Solving Model~\eqref{model:qp} determines the virtual point \( i^* \) along with the vector \( \boldsymbol{v}^* \), which is perpendicular to the line through Point $\mathsf{a}$ and Point $\mathsf{b}$. This implies that the distance of Point $\mathsf{c}$ to the convex hull generated by its corresponding reference set \( \mathcal{R}_\mathsf{c} \) decreases. In the subsequent iteration, applying the $\operatorname*{\textit{arg}\,\textit{max}}$ operation with vector \( \boldsymbol{v}^{**} \) ensures that Point $\mathsf{c}$ becomes enclosed within \( \texttt{conv}(\mathcal{R}_\mathsf{c}) \), ultimately reducing the distance to zero.

Applying the $\operatorname*{\textit{arg}\,\textit{max}}$ operation shifts this line in the direction of \( \boldsymbol{v}^* \), identifying Point $\mathsf{d}$ as an extreme point. Adding Point $\mathsf{d}$ to the reference set results in a new virtual Point \( i^{**} \) and an updated vector \( \boldsymbol{v}^{**} \), which is perpendicular to the line through Point $\mathsf{b}$ and Point $\mathsf{d}$. Importantly, the magnitude of \( \boldsymbol{v}^{**} \) is smaller than that of \( \boldsymbol{v}^{*} \), demonstrating the decreasing distance property.

\begin{figure}[ht]
    \centering
    \includegraphics[width=0.44\textwidth]{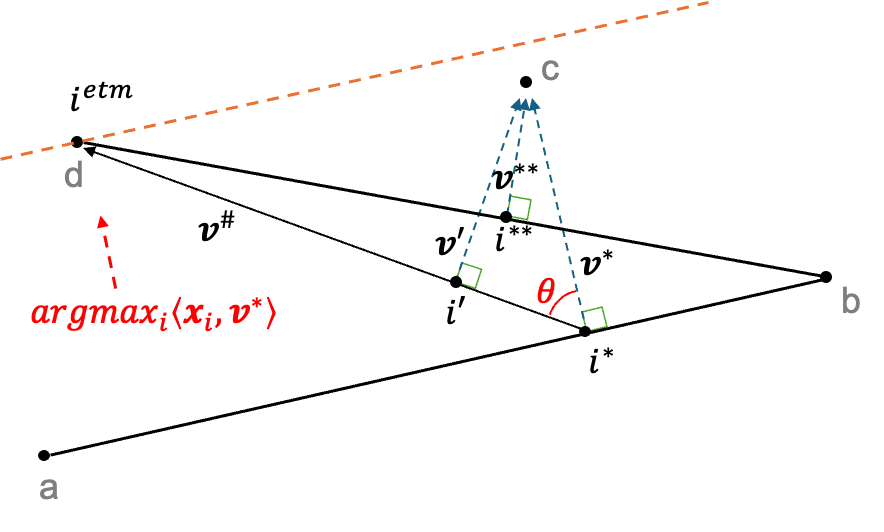}
    \caption{Monotonic Distance Decrease}
    \label{fig:convergent}
\end{figure}

\end{appendix}

\bibliography{references}

\end{document}